\newif\ifoasics
\definecolor{Darkblue}{rgb}{0,0,0.4}
\definecolor{Brown}{cmyk}{0,0.61,1.,0.60}
\definecolor{Purple}{cmyk}{0.45,0.86,0,0}
\newtheorem{theorem}{Theorem}[section]
\newtheorem{lemma}[theorem]{Lemma}
\newcommand{\oldparagraph}[1]{\subparagraph{#1}}
\providecommand{\card}[1]{\lvert#1\rvert}
\newcommand{\namedref}[2]{\hyperref[#2]{#1~\ref*{#2}}}
\newcommand{\ProblemName}[1]{\textsf{#1}}
\newcommand{\ZEX}{\ProblemName{$0$-Extension}\xspace}
\newcommand{\R}{\mathbb{R}}
\newcommand{\Exp}{\mbox{\rm EXP}}
\DeclareMathOperator*{\ddim}{ddim}
\newcommand{\eps}{\epsilon}
\newcommand{\la}{~\leftarrow~}
\newcommand{\iid}{i.i.d.\xspace}
\DeclareMathOperator*{\EX}{{\mathbb E}}
\newcommand{\MOE}{\textsf{M$0$E}\xspace}
\newcommand{\SPR}{\textsf{SPR}\xspace}
\newcommand{\NV}{\texttt{Relaxed-Voronoi}\xspace}
\def\inline#1:{\par\vskip 7pt\noindent{\bf #1:}\hskip 10pt}
\def\inline#1:{\par\vskip 7pt\noindent{\bf #1:}\hskip 10pt}
\def\blackslug{\hbox{\hskip 1pt \vrule width 4pt height 8pt
		depth 1.5pt \hskip 1pt}}
\def\QED{\quad\blackslug\lower 8.5pt\null\par}
\newcommand{\initOneLiners}{%
    \setlength{\itemsep}{0pt}
    \setlength{\parsep }{0pt}
    \setlength{\topsep }{0pt}
}
\newcommand{\alert}[1]{\textbf{\color{red}
		[[[#1]]]}\marginpar{\textbf{\color{red}**}}\typeout{ALERT:
		\the\inputlineno: #1}}
\definecolor{purple}{rgb}{0.294, 0, 0.71}
\providecommand{\algorithmname}{Algorithm}
\author{Arnold Filtser}
{Ben-Gurion University of the Negev, Be'er Sheva, Israel}
{arnoldf@cs.bgu.ac.il}
{}
{Work partially supported by the Lynn and William Frankel Center for Computer Sciences, ISF grant 1817/17, and by BSF Grant 2015813.}
\author{Robert Krauthgamer}
{Weizmann Institute of Science, Rehovot, Israel}
{robert.krauthgamer@weizmann.ac.il}
{}
{Work partially supported by ONR Award N00014-18-1-2364, the Israel Science Foundation grant \#1086/18, and a Minerva Foundation grant.}
\author{Ohad Trabelsi}
{Weizmann Institute of Science, Rehovot, Israel}
{ohad.trabelsi@weizmann.ac.il}
{}
{Work partly done at IBM Almaden.}
\authorrunning{A. Filtser, R. Krauthgamer and O. Trabelsi}
\subjclass{Theory of computation $\rightarrow$ Random projections and metric embeddings, Graph algorithms analysis}
\keywords{Clustering, Steiner point removal, Zero extension, Doubling dimension, Relaxed voronoi}
\author{Arnold Filtser%
  \thanks{Ben-Gurion University of the Negev. 
  	 Work partially supported by the Lynn and William Frankel Center for Computer Sciences, ISF grant 1817/17,
  	and by BSF Grant 2015813.
    Email: \texttt{arnoldf@cs.bgu.ac.il}
  }
  \and
  Robert Krauthgamer%
  \thanks{Weizmann Institute of Science.
    Work partially supported by ONR Award N00014-18-1-2364, the Israel Science Foundation grant \#1086/18, and a Minerva Foundation grant.
    Email: \texttt{robert.krauthgamer@weizmann.ac.il}
  }
  \and
  Ohad Trabelsi%
  \thanks{Weizmann Institute of Science.
    Work partly done at IBM Almaden.
    Email: \texttt{ohad.trabelsi@weizmann.ac.il}
  }
}
\title{Relaxed Voronoi: A Simple Framework for Terminal-Clustering Problems%
  \footnote{In earlier versions this algorithm was called ``Noisy Voronoi''.}
}
\begin{document}

\maketitle

\begin{abstract}
We reprove three known algorithmic bounds for terminal-clustering problems, 
using a single framework that leads to simpler proofs.
In this genre of problems, the input is a metric space $(X,d)$ 
(possibly arising from a graph) and a subset of terminals $K\subset X$, 
and the goal is to partition the points $X$ such that each part, called a cluster, contains exactly one terminal 
(possibly with connectivity requirements) so as to minimize some objective.
The three bounds we reprove are for 
Steiner Point Removal on trees [Gupta, SODA 2001], 
for Metric $0$-Extension in bounded doubling dimension [Lee and Naor, unpublished 2003], 
and for Connected Metric $0$-Extension [Englert et al., SICOMP 2014]. 

A natural approach is to cluster each point with its closest terminal,
which would partition $X$ into so-called Voronoi cells,
but this approach can fail miserably due to its stringent cluster boundaries. 
A now-standard fix, which we call the \texttt{Relaxed-Voronoi} framework, is to use enlarged Voronoi cells, but to obtain disjoint clusters, the cells are computed greedily according to some order.
This method, first proposed by Calinescu, Karloff and Rabani [SICOMP 2004], 
was employed successfully to provide state-of-the-art results
for terminal-clustering problems on general metrics. 
However, for restricted families of metrics, e.g., trees and doubling metrics, 
only more complicated, ad-hoc algorithms are known.  
Our main contribution is to demonstrate that the \texttt{Relaxed-Voronoi} algorithm is applicable to restricted metrics, 
and actually leads to relatively simple algorithms and 
analyses. 
\end{abstract}

\section{Introduction}\label{sec:intro}

We consider \emph{terminal clustering} problems, 
where the input is a metric space $(X,d)$ with $k$ terminals $K\subseteq X$, and the goal is to partition the points (vertices) into $k$ clusters,
each containing exactly one terminal, so as to minimize some objective. 
In the \emph{graphical version} of this problem, 
the input is a weighted graph $G=(V,E,w)$ with terminals $K\subset V$
and the metric $d$ is derived as the shortest-path metric on $X=V$
with respect to the non-negative edge weights $w$, 
and every output cluster should be connected (as an induced subgraph of $G$). 

We present for these problems a simple algorithmic framework 
that generalizes two different known algorithms, from \cite{CKR04,Fil18a}.
Using this framework, we obtain simple algorithms 
for two specific metric/graph classes, 
and recover their known bounds from \cite{G01,LN03,EGKRTT14} in a unified manner that is arguably simpler and more insightful than previous work.
In our case, even the analysis is short and simple. 
Thus, our main contribution is to identify and present the framework,
and to (non-trivially) apply it to specific metric/graph classes, and we hope it will lead to new results in the future.
We proceed to define the two specific problems that we investigate,
and briefly survey their known bounds.

\oldparagraph{Metric $0$-Extension (\MOE)} 
In this problem, 
the input is a metric space $(X,d)$ and a set of $k$ terminals $K\subset X$,
and the goal is to find a distribution $\mathcal{D}$ over retractions $f$
(i.e., functions $f:X\rightarrow K$ that satisfy $f(x)=x$ for all $x\in K$), 
such that 
\[
\forall x,y\in X,
\qquad 
\EX_{f\sim\mathcal{D}}\left[d(f(x),f(y))\right]\le\alpha\cdot d(x,y), 
\]
where $\alpha\ge1$, called the \emph{expected distortion}, 
is as small as possible.
Throughout, we seek the smallest $\alpha$ that holds for a class of metric spaces, 
for example all metrics with $k$ terminals, and then $\alpha=\alpha(k)$. 

The above is closely related to the well-known \emph{\ZEX} problem,
in which the input is a set $X$, a terminal set $K\subseteq X$, a metric $d_K$ over the terminals and a cost function $c:{X\choose 2}\rightarrow \mathbb{R}_+$,
and the goal is to find a retraction $f:X\rightarrow K$ 
that minimizes $\sum_{\{x,y\}\in {X\choose 2}} c(x,y)\cdot d_K(f(x),f(y))$.
The \ZEX problem, first proposed by Karzanov~\cite{Kar98}, 
generalizes the \ProblemName{Multiway Cut} problem \cite{DJPSY92} 
by allowing $d_K$ to be any discrete metric (instead of a uniform metric) and it is 
also a special case of the \ProblemName{Metric Labeling} problem \cite{KT02}, whose objective function has additional terms 
that represent assignment costs.
Karzanov introduced a linear programming (LP) relaxation for \ZEX,
which can be described as finding a (semi-)metric 
$d_X$ over $X$ that agrees with $d_K$ on $K$, and minimizes $\sum_{\{x,y\}\in {X\choose 2}} c(x,y)\cdot d_X(x,y)$. 
Rounding this LP relaxation is equivalent to the \MOE problem
(by the minimax theorem). 
Consequently, most previous work on \ZEX has actually focused on solving \MOE, 
and so does our work.

A well-known open problem is to determine the smallest distortion $\alpha(k)$ 
that suffices for all metric spaces with $k$ terminals. 
The currently known bounds are $O\left(\log k/\log\log k\right)$ 
due to Fakcharoenphol, Harrelson, Rao, and Talwar~\cite{FHRT03} 
(improving over \cite{CKR04}), 
and $\Omega(\sqrt{\log k})$ due to Calinescu, Karloff and Rabani~\cite{CKR04}. 
Improved upper bounds are known for special classes of metric spaces $X$,
for example $O(1)$ for the case where $X$ is 
the shortest-paths metric of a graph excluding a fixed minor \cite{CKR04}.
Another example is when the submetric on the terminals 
(i.e., the restriction of $d$ to $K$) is $\beta$-decomposable, 
which admits an $O(\beta)$ upper bound \cite{LN03} 
(a somewhat similar bound was obtained in \cite{AFHKTT04}). 
This implies an $O(\ddim(K))$ upper bound,
where $\ddim(K)$ denotes the doubling dimension of the terminals' submetric
(see \Cref{sec:prelims} for definition),
and our results reproduce the latter bound.

\oldparagraph{Steiner Point Removal (\SPR)}
In this problem, given a weighted graph $G=(V,E,w)$ 
and a set of terminals $K\subseteq V$,
the goal is to find a minor $M=(K,E')$ of $G$
(note its vertex set is exactly the set of terminals), 
that approximately preserves the distances between terminals,
which means (using $d_H$ to denote the shortest-path metric in $H$) that 
\[
\forall t,t'\in K, 
\qquad
d_G(t,t')\le d_{M}(t,t')\le \alpha \cdot d_G(t,t') ,
\] 
where $\alpha\ge1$, called the \emph{distortion}, is as small as possible.
Again, we seek the best $\alpha$ that holds for a class of graphs,
say all graphs with $k=\card{K}$ terminals. 

Let us denote $K=\{t_1,\dots,t_k\}$. 
A partition $\{V_1,\dots,V_k\}$ of $V$ is called a \emph{terminal partition} (with respect to $K$) if for all $i=1,\ldots,k$, 
the induced subgraph $G[V_i]$ is connected and contains $t_i$. 
The \emph{induced minor} $M$ of such a terminal partition 
is the minor obtained by contracting each $V_i$ into a single vertex called (abusing notation) $t_i$. 
Thus, $M$ has an edge between $t_i$ and $t_j$ iff $G$ has an edge 
between $V_i$ and $V_j$. 
The weight of this edge (if exists) is simply $d_G(t_i,t_j)$,
which represents the shortest-path in $G$; 
see \Cref{fig:contraction} for an example.
Most of the work on \SPR so far used terminal partitions to obtain a minor, 
and so does our work.

\begin{figure}[ht]
  \centering{\includegraphics[scale=0.75]{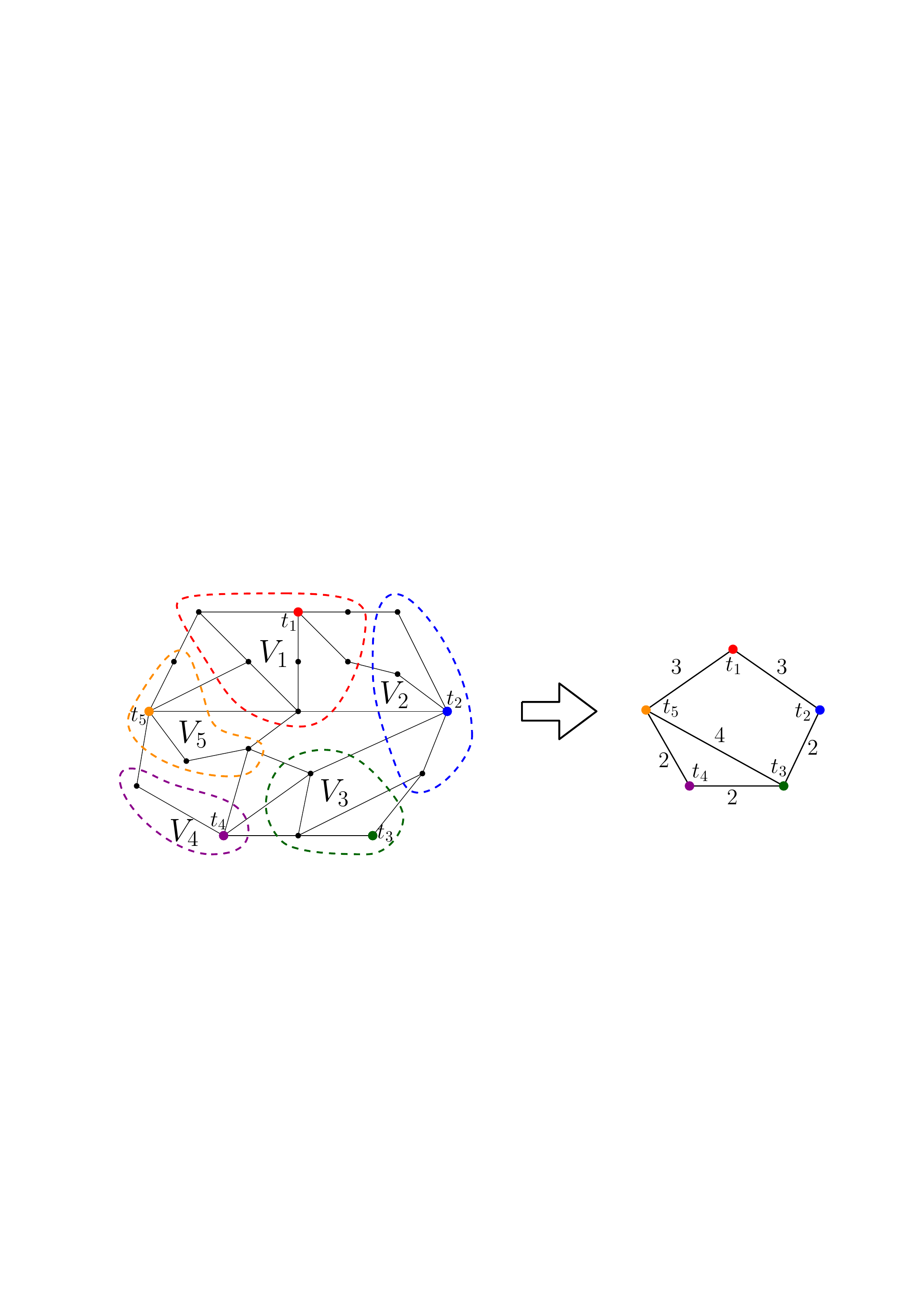}} 
  \caption{\label{fig:contraction}\small \it
    Example how a terminal partition of graph $G$ (on left) induces a minor $M$ (on right). The graph shown has unit weight edges and $5$ terminals, and the terminal partition is shown using dashed curves.
    The distortion is 
    $\frac{d_M(t_2,t_5)}{d_G(t_2,t_5)}=\tfrac{6}{2}=3$.
  }
\end{figure}

For the case where the graph $G$ is a tree, 
the smallest distortion possible for \SPR is known to be $8$. 
Gupta~\cite{G01} constructed a tree achieving distortion $8$; 
in fact, he was only interested in constructing a tree with vertex set $K$,  and later Chan, Xia, Konjevod, and Richa \cite{CXKR06} 
observed that Gupta's tree is actually a minor of the given tree $G$.
Surprisingly, they further showed that $8$ 
is the best possible distortion for the family of trees, 
as (unweighted) complete binary trees require distortion $8-\epsilon$. 
Our results reproduce this upper bound of $8$. 

For SPR in general graphs there is currently a huge gap. 
The best lower bound known is just $8$, known for trees,
and recently Filtser \cite{Fil18} showed an $O(\log k)$ upper bound
(improving over \cite{KKN15,Che18}). 
No better upper bound is known even for seemingly much simpler cases such as planar graphs,
and the only other bound known is $\alpha=O(1)$ for outerplanar graphs
\cite{BG08}. 

\subsection{Algorithmic Framework} 
\label{sec:alg}
A natural and straightforward algorithm for terminal clustering
is to simply partition the metric (or graph) into Voronoi cells,
i.e., map each point (or vertex) to its closest terminal,
to obtain a partition of $X$ (or $V$) with one cluster for each terminal. 
However, there are easy examples where this algorithm fails miserably,
because of the stringent cluster boundaries.
A now-standard fix is to build around each terminal (iteratively) 
a cluster that is an enlarged Voronoi cell in the remaining metric (or graph).

\begin{algorithm}[p]
	\caption{\texttt{Metric-Relaxed-Voronoi}}	\label{alg:metric}
	\DontPrintSemicolon
	\SetKwInOut{Input}{input}\SetKwInOut{Output}{output}
	\Input{metric $M=(X,d)$, terminals $K$, ordering $\pi=(t_1,\dots,t_k)$, \\magnitudes $R_1,\dots,R_k\ge 1$}
	\Output{retraction $f:X\rightarrow K$ (i.e., $\forall x\in K$, $f(x)=x$)}
	\BlankLine
	\For {$j=1,\ldots,k$}{
		\For {all unmapped points $x$ such that $d(t_{j},x)\le R_{j}\cdot D(x)$}{
			set $f(x)=t_j$\;
		}	
	}
	\Return $f$\;
\end{algorithm}

This approach was first used by Calinescu, Karloff and Rabani~\cite{CKR04}. 
We generalize their method, so that all previous uses of this approach
can be seen as instantiations of our algorithm with specific parameters. 
Our algorithm, called \texttt{Relaxed-Voronoi},
is formally described in Algorithm \ref{alg:metric}
where throughout we define 
$$
D(x)=d(x,K) = \min_{t\in K} d(x,t)
$$ 
to be the distance from $x\in X$ to its closest terminal.
The algorithm's parameters, formally presented as part of the input, 
are an ordering $\pi=(t_1,\dots,t_k)$ of the terminals 
and corresponding magnitudes $R_1,\ldots,R_k\ge 1$ (one for each terminal). 
The algorithm is rather simple;
each terminal $t_j$, in turn according to the ordering,
creates a cluster $V_j=f^{-1}(t_j)$ containing all yet-unclustered points $x$ 
at distance $d(x,t_j)\le R_j\cdot D(v)$.
That is, the cluster of $t_j$ is a Voronoi cell ``enlarged'' by factor $R_j$ 
in the remaining metric. 
Setting $R_1=\cdots=R_k=1$ recovers the partition into Voronoi cells.

The above algorithm cannot be used as is for the \SPR problem, 
because a terminal partition has an additional connectivity requirement.
Therefore, in the graphical case, 
instead of taking all remaining vertices $x$ that satisfy 
$d_G(x,t_j)\le R_j\cdot D(v)$, 
we create $V_j$ in a Dijkstra-like iterative fashion, as follows. 
Initially $V_j=\{t_j\}$, 
and we repeatedly add to $V_j$ any unclustered vertex 
that has a neighbor in $V_j$ and is at distance $d_G(v,t_j)\le R_j\cdot D(v)$. 
See Algorithms~\ref{alg:mainSPR} and~\ref{alg:CreateCluster} for a formal description. 
This version of the \texttt{Relaxed-Voronoi} algorithm was first proposed 
by Filtser \cite{Fil18a} for the \SPR problem in general graphs. 
It is simpler to describe and to analyze than the 
\texttt{Ball-Growing} algorithm of previous work \cite{KKN15,Che18,Fil18}.%
\footnote{The \texttt{Ball-Growing} algorithm creates clusters in rounds, where each round iteratively enlarges every cluster,
	by increasing its radius around each terminal (in the remaining graph)
	by a value sampled from an exponential distribution.}
Filtser also showed that the \NV algorithm can be implemented in time 
$O(|E|\log |V|)$.%
\footnote{\label{footnote:RealAlg}The $O(|E|\log |V|)$-time in \cite{Fil18a}
	actually implements a slightly different algorithm,
	where the test $d_{G}(v,t_j) \le R_j\cdot D(v)$  (line \ref{line:desidion}) 
	is replaced by $d_{G[V_j\cup\{v\}]}(v,t_j) \le R_j\cdot D(v)$. 
	The distortion bound holds for this algorithm too.
}

\begin{algorithm}[p]
	\setcounter{AlgoLine}{0}
	\caption{\texttt{Graphic-Relaxed-Voronoi}}\label{alg:mainSPR}
	\DontPrintSemicolon
	\SetKwInOut{Input}{input}\SetKwInOut{Output}{output}
	\Input{weighted graph $G=(V,E,w)$, terminals $K$, ordering $\pi=(t_1,\dots,t_k)$, \\magnitudes $R_1,\dots,R_k\ge 1$}
	\Output{Minor $M$}
	\BlankLine
	$V_\perp~\la~V\setminus  K$\hspace{100pt}\emph{//~$V_\perp$ is the currently unclustered vertices.}\;
	
	\For {$j=1,\dots,k$}{
		$V_j\la \texttt{Create-Cluster}(G,V_\perp ,t_j,R_j)$\;
		
		$V_\perp\leftarrow V_\perp\setminus V_j$\;
	}	
	\Return the terminal-centered minor $M$ of $G$ induced by $V_1,\ldots,V_k$\;
\end{algorithm}

\begin{algorithm}[p]
	\setcounter{AlgoLine}{0}
	\caption{\FuncSty{\texttt{Create-Cluster}}}
	\label{alg:CreateCluster}
	\DontPrintSemicolon
	\SetKwInOut{Input}{input}\SetKwInOut{Output}{output}
	\Input{weighted graph $G=(V,E,w)$, unclustered vertices $V_\perp$, terminal $t_j$, magnitude $R_j$}
	\Output{cluster $V_j$}
	\BlankLine
	$V_j\leftarrow \{t_j\}$, $U\la \emptyset$, $N\leftarrow \{\text{all neighbors of $t_j$ in $V_\perp$} \}$\;
	
	\While{$N\ne \emptyset$}{
		pick an arbitrary vertex $v\in N$ and remove it from $N$\;
	
		\If{$d_{G}(v,t_j)\le R_j\cdot D(v)$\label{line:desidion}}{
			add $v$ to $V_j$\;
			
			add all the neighbors of $v$ in $V_\perp\setminus \left(U\cup V_j\right)$ to $N$ \;
		}
		\Else{
			add $v$ to $U$\;
		}
	}
	\Return $V_j$\;
\end{algorithm}

\subsection{Our Contribution}
\label{sec:contribution}

All previous uses of the \NV algorithm were on general metrics or graphs.
Specifically, 
Calinescu et al.~\cite{CKR04} and Fakcharoenphol et al. \cite{FHRT03},
used a uniformly random ordering $\pi$ 
and a single random magnitude $R$ (same for all terminals),
and Filtser~\cite{Fil18a} used an arbitrary ordering $\pi$ 
and magnitudes that are independently and identically distributed (\iid) 
drawn from an exponential-like distribution.
However, for special families of metrics or graphs,
this type of algorithm was never used;
instead, ad-hoc algorithms were developed,
leading to more involved algorithms and analyses. 
Our contribution is to tailor the \NV algorithm to special input families
by choosing the ordering $\pi$ deterministically but depending on the input 
at hand (rather than a random or arbitrary ordering).
As a result, we reprove three known results using simpler algorithms and
analyses. 
We believe that this approach will lead to additional and new results.

\oldparagraph{\SPR on Trees} 
Gupta's algorithm \cite{G01}, which achieves distortion $8$, 
is designed specifically for trees and it is unclear how to generalize it. 
Its recursive definition makes it arguably difficult to understand intuitively
how its output on a given tree would look like. 
For example, the fact that the algorithm is tight and produces a minor \cite{CXKR06} was non-trivial and even surprising. 
This result has proved useful in the past, 
yet it is a bit mysterious why $8$ is the optimal bound,
i.e., what tradeoff does it optimize.

We use the \NV algorithm to construct a tree with optimal distortion $8$. 
The choice of parameters in the algorithm is very simple --- 
the magnitudes are all set to $R_j=3$, 
and the ordering $\pi$ is defined by listing the terminals 
in order of increasing distance from an arbitrary ``root'' vertex $v$
(breaking ties arbitrarily).
Our algorithm's description is simple and intuitive, 
its distortion bound $8$ is explained by the analysis,
and it is straightforward that the output tree is a minor of the input tree. 
Perhaps surprisingly, our algorithm outputs the same tree as Gupta's algorithm. 
Overall, our algorithm provides a better understanding of Gupta's celebrated result. 
We believe that this approach can be generalized to additional graph families, and hopefully achieve a constant distortion for \SPR on (say) planar graphs
(where the current bound is only $O(\log k)$, which holds for general graphs).

\oldparagraph{\MOE on Doubling Metrics} 
Lee and Naor's \cite{LN03} algorithm achieves $O(\ddim)$ 
when the submetric on the terminals 
(i.e., the metric's restriction to points in $K$)
has doubling dimension at most $\ddim$. 
Their algorithm is based on stochastic decompositions,
specifically converting padded decompositions into separating decompositions, then defining (new) partial decompositions, 
and finally using these decompositions in all the possible distance scales.

We use the \NV algorithm to achieve the same $O(\ddim)$ upper bound, 
by setting the parameters as follows. 
The magnitudes $R_j$ are \iid, each distributed like $2\cdot e^{Z}$ where 
$Z$ is drawn from an exponential distribution with parameter $\Theta(\ddim)$. 
We set $\pi$ to be the Gonzalez order~\cite{G85}, 
where $t_1$ is an arbitrary terminal,
and each successive $t_i$ is the terminal farthest from $\{t_1,\dots,t_{i-1}\}$,
breaking ties arbitrarily.
Our algorithm is much simpler, more elegant, and its straightforward implementation takes only $O(nk)$ time (assuming the input is given as a matrix of pairwise distances). 
We hope that our ideas could lead to a better upper bound for the \SPR problem in the case where the metric restricted to the terminals has a bounded doubling dimension.

\oldparagraph{Connected \MOE} 
This is a graphic version of the \MOE problem. The input metric 
is the shortest-path metric of an edge-weighted graph $G=(V,E,w)$,
and similarly to the \MOE problem, the goal is to find a distribution over retractions $f:V\rightarrow K$, but with an additional requirement: 
each cluster $f^{-1}(t_j)$ must be connected (as a subgraph of $G$).
Englert et al.~\cite{EGKRTT14} achieved for this problem 
expected distortion $\alpha=O(\log k)$
using an algorithm that partitions the graph vertices into clusters
using stochastic decompositions in all possible distance scales, 
and then merging some clusters to enforce connectivity.
We use a graphic version of the \NV algorithm (which guarantees connectivity) 
to achieve the same expected distortion $O(\log k)$. 
When describing this algorithm, we abuse notation and identify $f(v)=t_j$ 
with $v\in V_j$, i.e., when the algorithm adds a vertex $v$ to cluster $V_j$, it should be understood as also assigning $f(v)=t_j$. 
The graphic \NV algorithm is much simpler than the previous algorithm
of \cite{EGKRTT14}, and we set its parameters as follows.
The ordering $\pi$ is arbitrary,
and the magnitudes $R_j$ are \iid, each distributed like $e^{Z}$ where 
$Z$ is drawn from an exponential distribution with parameter $\Theta(\log k)$.
Even though this problem is concerned with general graphs 
and there is nothing clever about the ordering, 
we still chose to present this result, as it gives further evidence
to the strength and broad applicability of the \NV algorithm. 
Another advantage is that it can be implemented in $O(|E|\log |V|)$ time, 
while the algorithm of~\cite{EGKRTT14} requires more time (an unspecified 
polynomial). See \Cref{footnote:RealAlg} for additional details.

\subsection{Related Work}
The Voronoi-like approach was used also in other recent algorithms.
Gupta and Talwar~\cite{GT13} introduced the \texttt{Random-Rates} algorithm,
in which each terminal $t_j$ samples a rate $\rho_j\ge 1$,
and then every point $x$ is clustered with the terminal $t_j$ that minimizes
the ratio $\frac{d(x,t_j)}{\rho_j}$.
The main difference from the \NV algorithm is that in their algorithm,
the terminals create their clusters simultaneously (rather than sequentially),
which does not guarantee that the clusters are connected.
Gupta and Talwar~\cite{GT13} proved an $O(\log k)$ expected distortion
for this algorithm on the \MOE problem.
It seems unlikely that their algorithm can provide $O(\ddim(K))$ upper bound,
which usually follows by bounding the number of clusters relevant
to any ``separation event'' by $2^{O(\ddim(K))}$. 
We achieve this using the sequential ordering, 
but in their algorithm too many clusters can be relevant.

Miller, Peng and Xu~\cite{MPX13} introduced the \texttt{Parallel-Partition} algorithm to partition a graph into low-diameter clusters
(without a given set of terminals).
In this algorithm, each vertex $u$ samples a random shift $s_u\ge 0$, 
and then every vertex $x$ joins the cluster of $u$ with minimum $d(x,u)-s_u$.
This algorithm produces connected clusters,
however, it gets as an input a target diameter $\Delta>0$,
and its guarantees are proportional to this parameter. 
In contrast, the \NV algorithm is scale-free and handles all distances scales simultaneously (similar to the above \texttt{Random-Rates} algorithm),
and therefore it is more natural for terminal-partitioning problems.

\section{Preliminaries}
\label{sec:prelims}
Consider an undirected graph $G=(V,E)$ with non-negative edge weights
$w: E \to \R_{\geq 0}$ and let $d_{G}$ denote the shortest-path metric in $G$. 
For a subset of vertices $A\subseteq V$, let $G[A]$ denote the \emph{induced graph} on $A$. 
Fix $K=\{t_1,\dots,t_k\}\subseteq V$ to be a set of the given \emph{terminals}. 
As mentioned earlier, 
for a vertex $v\in V$ we define $D(v)=\min_{t\in K}d_{G}(v,t)$ 
to be the distance from $v$ to its closest terminal. 

A graph $H$ is a \emph{minor} of a graph $G$ if it can be obtained from $G$ 
by edge deletions, edge contractions, and vertex deletions. 
As defined earlier, a partition $\{V_1,\dots,V_k\}$ of $V$ 
is called a \emph{terminal partition} 
(with respect to $K$) if for all $i=1,\ldots,k$, 
the induced subgraph $G[V_i]$ is connected and contains $t_i$. 
The \emph{minor induced} by a terminal partition $\{V_1,\dots,V_k\}$ 
is the minor $M$ obtained by contracting each set $V_i$ 
into a single vertex called (abusing notation) $t_i$. 
Notice that $M$ has an edge between $t_i$ and $t_j$ iff there are vertices $v_i\in V_i$ and $v_j\in V_j$ such that $\{v_i,v_j\}\in E$. 
The weight of this edge (if exists) is simply $d_G(t_i,t_j)$,
which represents the shortest-path in $G$. 
It is easily verified that by the triangle inequality, 
for every pair of (not necessarily adjacent) terminals $t_i,t_j$, 
we have $d_M(t_i,t_j)\ge d_G(t_i,t_j)$.
The \emph{distortion} of the induced minor is  
$\max_{i\ne j}\frac{d_M(t_i,t_j)}{d_G(t_i,t_j)}$.
It was proved in~\cite{Fil18a} that the \NV algorithm always returns a terminal partition. 
\begin{lemma}[Lemma 2 in \cite{Fil18a}]\label{lem:AlgRetPar}
	The sets $V_1,\ldots,V_k$ constructed by Algorithm \ref{alg:mainSPR} constitute a terminal partition.
\end{lemma}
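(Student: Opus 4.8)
The plan is to verify the two defining properties of a terminal partition separately: (i) the sets $V_1,\dots,V_k$ partition $V$, and (ii) each $G[V_i]$ is connected and contains $t_i$. For the partition property, disjointness is immediate from the algorithm's bookkeeping: in Algorithm~\ref{alg:mainSPR} we maintain the set $V_\perp$ of currently unclustered vertices and, after forming $V_j$, we remove $V_j$ from $V_\perp$; moreover \texttt{Create-Cluster} only ever adds vertices that currently lie in $V_\perp$ (the neighbor sets $N$ are always intersected with $V_\perp$ and with the complement of $U\cup V_j$). Hence no vertex is placed in two different clusters. For covering, I would argue that $V_\perp=\emptyset$ at the end, or more precisely that every vertex of $V$ is eventually clustered. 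This is the step that needs a genuine (if short) argument, and I expect it to be the main obstacle.

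The key idea for covering is a ``nearest-terminal'' or minimum-counterexample argument. Suppose some vertex is never clustered, and let $v$ be such a vertex minimizing $D(v)$ (breaking ties by, say, shortest-path distance to its closest terminal along a shortest path, or by hop-count on such a path). Let $t=t_j$ be a closest terminal to $v$, so $d_G(v,t)=D(v)$, and consider a shortest $v$--$t$ path $P$ in $G$. Every internal vertex $u$ of $P$ satisfies $d_G(u,t)<d_G(v,t)=D(v)$, hence $D(u)\le d_G(u,t)<D(v)$, so by minimality every such $u$ is clustered; and in fact each $u$ on $P$ satisfies $d_G(u,t_j)\le d_G(u,t_j)= D(u)\cdot 1 \le R_j\cdot D(u)$ since $R_j\ge 1$ and $t_j$ is a (not necessarily unique) closest terminal — wait, one must be careful here, because $u$ could have been clustered by an \emph{earlier} terminal $t_{j'}$ with $j'<j$. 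That is fine: the point is only that $u$ is clustered by \emph{somebody}. Now look at the moment \texttt{Create-Cluster} is invoked for $t_j$ (iteration $j$ of the main loop). The neighbor $u_1$ of $t_j$ on $P$ is in $V_\perp$ at the start of that iteration iff it was not clustered earlier; if it was clustered earlier we are done (it lies in some $V_{j'}$), and if not, then when $u_1$ is pulled from $N$ the test $d_G(u_1,t_j)\le R_j\cdot D(u_1)$ holds because $d_G(u_1,t_j)=D(u_1)$ (as $u_1$ lies on a shortest path from $v$ to its closest terminal $t_j$, and $t_j$ is also closest to $u_1$) and $R_j\ge 1$; hence $u_1\in V_j$, and this in turn puts the next vertex of $P$ into $N$, and so on along $P$ until we reach $v$ itself, whose test $d_G(v,t_j)= D(v)\le R_j D(v)$ also succeeds — contradicting that $v$ is unclustered. (If some vertex of $P$ was clustered by an earlier terminal, the propagation along $P$ may be interrupted, but then that vertex witnesses that $v$ is \emph{also} reachable/clustered through a different route, or one simply re-runs the argument from the first unclustered vertex on $P$; this bookkeeping is the routine part I would spell out carefully.)

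For property (ii), connectivity of $G[V_i]$ and containment of $t_i$ are essentially built into \texttt{Create-Cluster}: it initializes $V_i=\{t_i\}$ and only adds a vertex $v$ at the moment it has a neighbor already in $V_i$ (the set $N$ is populated exclusively with neighbors of vertices that have just been added to $V_i$). A straightforward induction on the order of insertion shows that every vertex added to $V_i$ is connected to $t_i$ within $G[V_i]$, using only vertices inserted no later than itself; hence $G[V_i]$ is connected and contains $t_i$. Combining (i) and (ii) gives that $\{V_1,\dots,V_k\}$ is a terminal partition. I would present (ii) in a sentence or two and devote the bulk of the write-up to the covering argument in (i), which is where the only real subtlety lies.
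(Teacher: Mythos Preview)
The paper does not prove this lemma itself; it is quoted verbatim from \cite{Fil18a}, so there is no in-paper proof to compare against. Your plan is nonetheless on the right track: disjointness and connectivity are indeed immediate from the bookkeeping in Algorithms~\ref{alg:mainSPR} and~\ref{alg:CreateCluster}, and the only substantive point is covering.

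However, the covering argument as sketched has a real gap that you label ``routine bookkeeping'' but is not. Your propagation along $P$ from $t_j$ toward $v$ breaks the moment some intermediate $u_\ell$ was clustered by an earlier terminal $t_{j'}$ with $j'<j$: then $u_\ell\notin V_\perp$ at iteration $j$, so $u_{\ell+1}$ is never placed in $N$ during the call for $t_j$, and the chain stops well short of $v$. Your two suggested patches do not work as stated: ``that vertex witnesses that $v$ is reachable/clustered through a different route'' is not an argument, and ``re-run from the first unclustered vertex on $P$'' is vacuous because \emph{every} internal vertex of $P$ is already clustered (that was precisely the point of choosing $v$ with minimum $D(v)$).

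The clean fix is to abandon the propagation from $t_j$ and look only at the \emph{last} internal vertex of $P$, namely the neighbor $u=u_{m-1}$ of $v$. By minimality $u$ is clustered, say $u\in V_{j'}$, and (as you correctly note) $t_j$ is also a closest terminal to $u$, so $D(v)=d_G(v,u)+D(u)$. Then
\[
d_G(v,t_{j'})\ \le\ d_G(v,u)+d_G(u,t_{j'})\ \le\ d_G(v,u)+R_{j'}\,D(u)\ \le\ R_{j'}\bigl(d_G(v,u)+D(u)\bigr)\ =\ R_{j'}\,D(v),
\]
so $v$ passes the test for $t_{j'}$. In particular $v$ can never be placed in $U$ during iteration $j'$, hence when $u$ joins $V_{j'}$ the vertex $v$ is added to $N$ and subsequently to $V_{j'}$ --- contradicting that $v$ is unclustered. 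This single inequality is what your write-up should center on, not a walk along $P$ from $t_j$.
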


We say that a metric $(X,d)$ has \emph{doubling dimension} $\ddim$ if every ball of radius $r>0$ can be covered by at most $2^{\ddim}$ balls of radius $r/2$. 
We will use the following \emph{packing property} of doubling spaces \cite{GKL03}: 
Consider a set $N$ such that for every $x\neq y\in N$ it holds that $d(x,y)\ge \delta$. Then every ball of radius $\Delta\ge \delta$ contains at most $\big(\frac{4\Delta}{\delta}\big)^{O(\ddim)}=2^{O\left(\ddim\cdot\log\frac\Delta\delta\right)}$ points from $N$.

We denote by $\Exp(\lambda)$ the \emph{exponential distribution} 
with mean $\lambda>0$, which has density function $f(x)=\frac{1}{\lambda}e^{-\frac{x}{\lambda}}$ for $x\ge0$. 
This distribution is \emph{memoryless}: if $X\sim\Exp(\lambda)$, then for all $a,b\ge 0$ we have $\Pr[X\ge a+b\mid X\ge a]=\Pr[X\ge b]$. 
In other words, conditioned on $X\ge a$, it holds that $X\sim a+\Exp(\lambda)$.

\section{\SPR on trees}
In this section we analyze the \NV algorithm (Algorithm \ref{alg:mainSPR}) on trees.

\begin{theorem}\label{thm:mainSPR}
	Let $T$ be a tree and $r$ be an arbitrary vertex. Let $\pi$ be an ordering of the terminals according to an increasing distance from $r$. Then the tree $T_K$ returned by the \NV algorithm on input $(T,K,\pi,\{3,3,\dots,3\})$ has distortion at most $8$. Moreover, the algorithm can be implemented in linear time.
\end{theorem}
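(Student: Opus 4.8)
The plan is to run the \NV algorithm with the prescribed parameters ($R_j = 3$ for all $j$, and $\pi$ ordering terminals by increasing distance from the root $r$) and bound the distortion of the resulting minor $T_K$. Since $T_K$ is an induced minor of a terminal partition (\lemmaref{lem:AlgRetPar}), we automatically have $d_{T_K}(t_i,t_j) \ge d_T(t_i,t_j)$, so the whole task is the upper bound $d_{T_K}(t_i,t_j) \le 8\, d_T(t_i,t_j)$. The natural strategy is to fix two terminals $t_a, t_b$ and exhibit a short path between them in $T_K$. The path in $T$ between $t_a$ and $t_b$ passes through a sequence of clusters $V_{i_0}=V_a, V_{i_1}, \dots, V_{i_m}=V_b$; consecutive clusters on this list are adjacent in $T_K$, and the edge between $V_{i_\ell}$ and $V_{i_{\ell+1}}$ has weight $d_T(t_{i_\ell}, t_{i_{\ell+1}})$. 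So it suffices to bound $\sum_\ell d_T(t_{i_\ell}, t_{i_{\ell+1}})$ by $8\, d_T(t_a,t_b)$; actually, because the tree path is a geodesic and we can use the triangle inequality cluster-by-cluster, it suffices to bound the "radius-type" quantity: for each vertex $v$ on the $t_a$–$t_b$ path that lies in cluster $V_{i_\ell}$, control $d_T(v, t_{i_\ell})$ in terms of $d_T(v, \{t_a,t_b\})$ or of $D(v)$.

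The key structural step is to understand, when a vertex $v$ gets assigned to cluster $V_j$, how $d_T(v, t_j)$ compares to $D(v)$ and to the distances that matter. The inclusion test is $d_T(v, t_j) \le 3 D(v)$, which gives the basic bound $d_T(v, t_j) \le 3 D(v)$ for every clustered vertex. Combined with the triangle inequality this already yields a constant; the work is to push the constant down to exactly $8$. The crucial observation is that the algorithm's connectivity (Dijkstra-like growth) on a \emph{tree} forces $V_j$ to be a subtree, so if $v \in V_j$ then the entire tree path from $v$ to $t_j$ lies in $V_j$ — in particular each intermediate vertex $u$ on that path also satisfied the test $d_T(u, t_j) \le 3 D(u)$ at the time it was added. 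A second crucial observation uses the ordering: if $v$ is on the geodesic between $t_a$ and $t_b$ and $v \in V_j$ with $j$ processed before $a$ and $b$ (which must happen since $t_a$ or $t_b$ would otherwise have grabbed $v$), then $t_j$ is at least as close to $r$ as $\min(d_T(r,t_a), d_T(r,t_b))$, and I expect to combine this with a "the closest terminal to $v$ is within the portion of the tree hanging near $v$" argument to relate $D(v)$ back to $d_T(v, \{t_a,t_b\})$.

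The tightest part of the argument — and where I expect the main obstacle — is the careful accounting that turns these local bounds into the global factor $8$ rather than, say, $12$ or $16$. I would set $t_a$ to be the terminal closer to $r$, let $p$ be the vertex on the $t_a$–$t_b$ path closest to $r$ (the "peak"), and split the analysis on the two halves $t_a \to p$ and $p \to t_b$; the $R_j = 3$ choice is presumably calibrated so that along each half the accumulated detour telescopes to at most $4\, d_T(t_a,t_b)$ per half. One has to be careful that when the geodesic enters and leaves a cluster $V_j$ (possibly several times along the path), the contributions combine correctly; since on a tree a subtree $V_j$ intersects the geodesic in a single contiguous segment, each cluster is entered at most once, which keeps the sum manageable. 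For the "moreover" clause: computing $D(v)$ for all $v$ is a single multi-source Dijkstra / BFS from $K$ on the tree (linear time on a tree), sorting terminals by $d_T(r,\cdot)$ can be done in linear time since these are determined by a single DFS from $r$ (or via the same multi-source computation with integer-free comparisons), and each cluster-growth step as described in \algref{alg:CreateCluster} touches each tree edge a constant number of times, so the total is $O(|V|)$ — here I would invoke \Cref{footnote:RealAlg} to use the variant where the test is $d_{T[V_j \cup \{v\}]}(v,t_j) \le R_j D(v)$, which on a tree coincides with $d_T(v,t_j)$ on the relevant vertices and makes the linear-time bookkeeping transparent.
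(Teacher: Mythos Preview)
Your plan is a genuinely different route from the paper's, and as written it has a real gap.

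\textbf{Where your argument breaks.} Your second ``crucial observation'' is false: it is \emph{not} true that every intermediate cluster $V_{i_\ell}$ on the $t_a$--$t_b$ geodesic has $i_\ell$ processed before both $a$ and $b$. A tiny example is the path $t_1\!-\!v_1\!-\!v_2\!-\!t_2\!-\!v_3\!-\!v_4\!-\!t_3$ with unit edges and root $r=t_1$; the order is $t_1,t_2,t_3$, yet the $t_1$--$t_3$ geodesic passes through $V_2$, whose terminal is processed \emph{after} $t_1$. More generally, nothing forces $t_a$ to absorb $v$ just because $t_a$ comes earlier: you would need $d_T(t_a,v)\le 3D(v)$, which can easily fail. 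Without that ordering claim you have no way to relate $D(v)$ back to $d_T(v,\{t_a,t_b\})$, and the ``split at the peak, hope each half telescopes to $4$'' step remains a hope rather than an argument. The direct cluster-by-cluster summation you propose can certainly give \emph{some} constant via $d_T(v,t_{i_\ell})\le 3D(v)$ and triangle inequalities, but controlling the number of transitions and pushing the constant down to exactly $8$ is precisely the hard part you left unaddressed.

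\textbf{What the paper does instead.} The paper's proof is inductive rather than path-summing. The key structural lemma is that after removing the first cluster $V_1$, each remaining connected component $C_i$ is a self-contained sub-instance: every $v\in C_i$ has its closest terminal inside $C_i$, and the algorithm restricted to $C_i$ (with the induced order $\pi_i$, which is again ``increasing distance from the root $r_i$ of $C_i$'') produces exactly the same clustering. This lets one prove by induction on $|K|$ that $d_{T_K}(t_1,t)\le\frac{R+1}{R-1}\,d_T(t_1,t)$ for all $t$, using that the boundary vertex $r_i$ of $C_i$ satisfies $d_T(t_1,r_i)>R\,D(r_i)$. Then for general $t_i,t_j$ in different components one routes through $t_1$ and uses $d_T(t_1,t_i)\le(R+1)\,d_T(s_i,t_i)$ (where $s_i\in V_1$ is the last vertex on the path inside $V_1$), giving $\frac{(R+1)^2}{R-1}$, minimized at $R=3$ to yield $8$. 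The recursion is what makes the tight constant fall out cleanly; your global path decomposition does not seem to access this structure.

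Your linear-time sketch is fine in spirit; the paper spells out the two-pass computation of $D(\cdot)$ and observes that on a tree Dijkstra can be run with a queue, matching what you say.
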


In \Cref{sec:distortion} we bound the distortion produced by our algorithm, and in \Cref{sec:time} we describe its linear time implementation.
See \Cref{fig:btreeExample} for an example execution of the algorithm on a complete unweighted binary tree (the lower bound example used by \cite{CXKR06}).
\begin{figure}[ht]
  \centering{\includegraphics[scale=0.6]{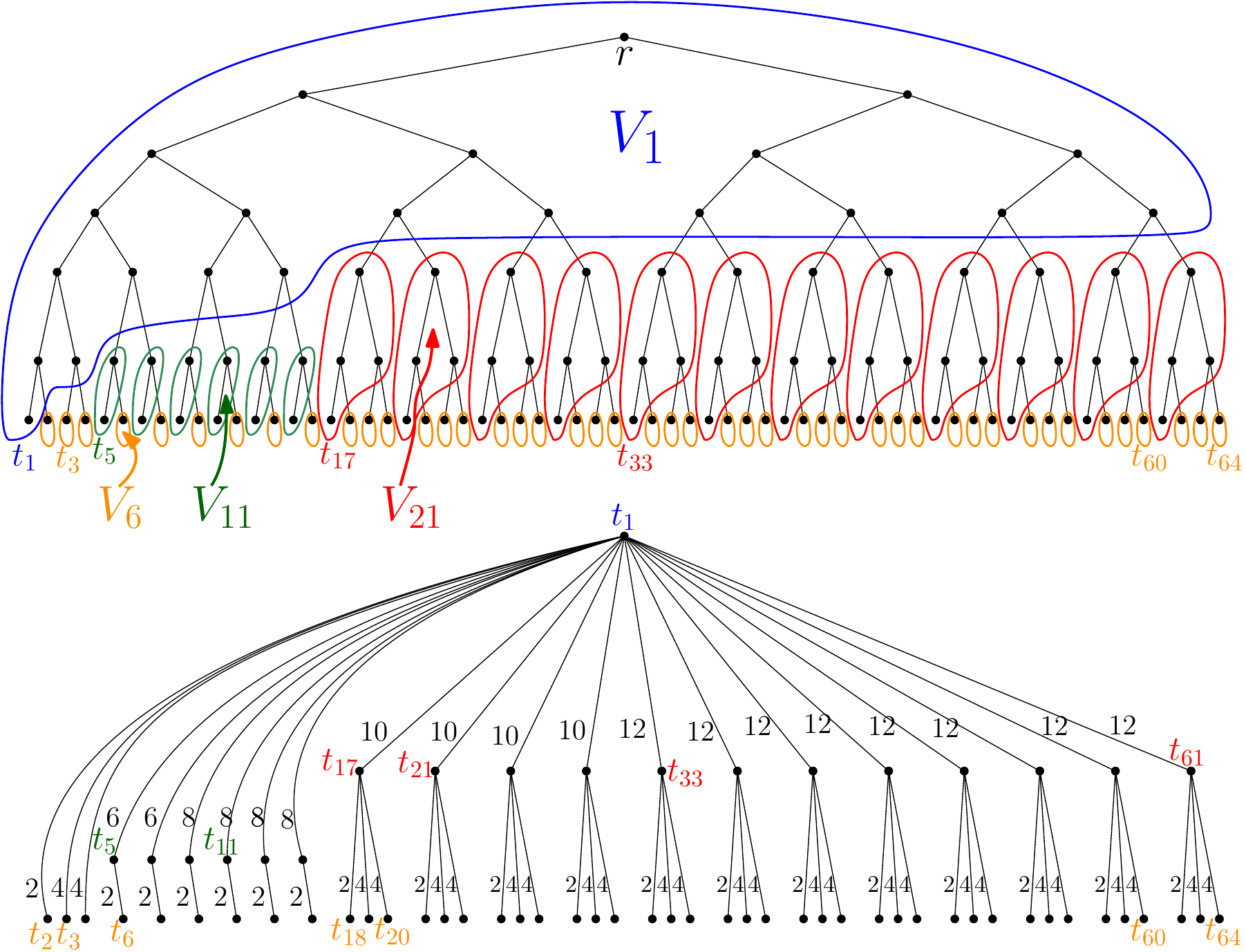}} 
  \caption{\label{fig:btreeExample}\small \it
    An example execution of the \texttt{Relaxed-Voronoi} algorithm.
    The top graph is the input, a complete binary tree of height $6$ with all the leaves as terminals. Choosing $r$ to be the root of the tree, the terminal ordering $\pi$ can be arbitrary, and we choose a left-to-right ordering. The resulting clusters are shown using colored curves. 
    The bottom graph shows the minor induced by the terminal partition above, representing every cluster by its top-most vertex (rather than the terminal).
    The distortion is 
    $\frac{d_{T_k}(t_{64},t_{60})}{d_T(t_{64},t_{60})}=\tfrac{32}{6} = 5\tfrac23$.
  }
\end{figure}

\subsection{Distortion Analysis}\label{sec:distortion}

To better understand the final distortion bound $8$,
we analyze the \texttt{Relaxed-Voronoi} algorithm for a general $R>1$, 
and we optimize it only at the very end, 
setting $R=3$ to obtain distortion $8$. 

Denote by $T_K$ the tree minor returned by the algorithm,
and call the vertex $r$ (used to determine $\pi$) the root. 
Let $t_1$ be the first terminal w.r.t $\pi$,
and let $V_1$ be the cluster that the \texttt{Relaxed-Voronoi} algorithm constructs for $t_1$.
This terminal $t_1$ is the closest terminal to the root $r$,
and actually also to every vertex on the shortest path from $t_1$ to $r$.  Therefore $r$ joins the cluster $V_1$.
Let $C_1,\dots,C_s$ be the connected components of the remaining graph
$G\setminus V_1=G[V\setminus V_1]$,
and let $K_i=C_i\cap K$ be the subset of terminals in component $C_i$. 
We claim that for every vertex $v\in C_i$ (for every $i$),
its closest terminal $t_v$ satisfies $t_v\in K_i$.
Indeed, assume towards contradiction that some vertex $u$ on the path between $v$ to $t_v\in K_i$ joined $V_1$.
Consider then an arbitrary vertex $u'$ on the path from $v$ to $u$,
and note that $t_v$ is also the closest terminal to both $u,u'$.
By the triangle inequality, 
$d_{T}(t_{1},u')\le d_{T}(t_{1},u)+d_{T}(u,u')\le R\cdot\big(d_{T}(t_{v},u)+d_{T}(u,u')\big)=R\cdot d_{T}(t_{v},u')$.
This implies that every vertex on the path from $u$ to $v$ will join $V_1$
(recall we assumed $u$ joins $V_1$, and the algorithm iteratively adds
neighbor of vertices already in $V_1$),
in contradiction with $v\in C_i$.
See \Cref{fig:connnectedT_i} for illustration.
\begin{figure}[ht]
  \centering{\includegraphics[scale=0.9]{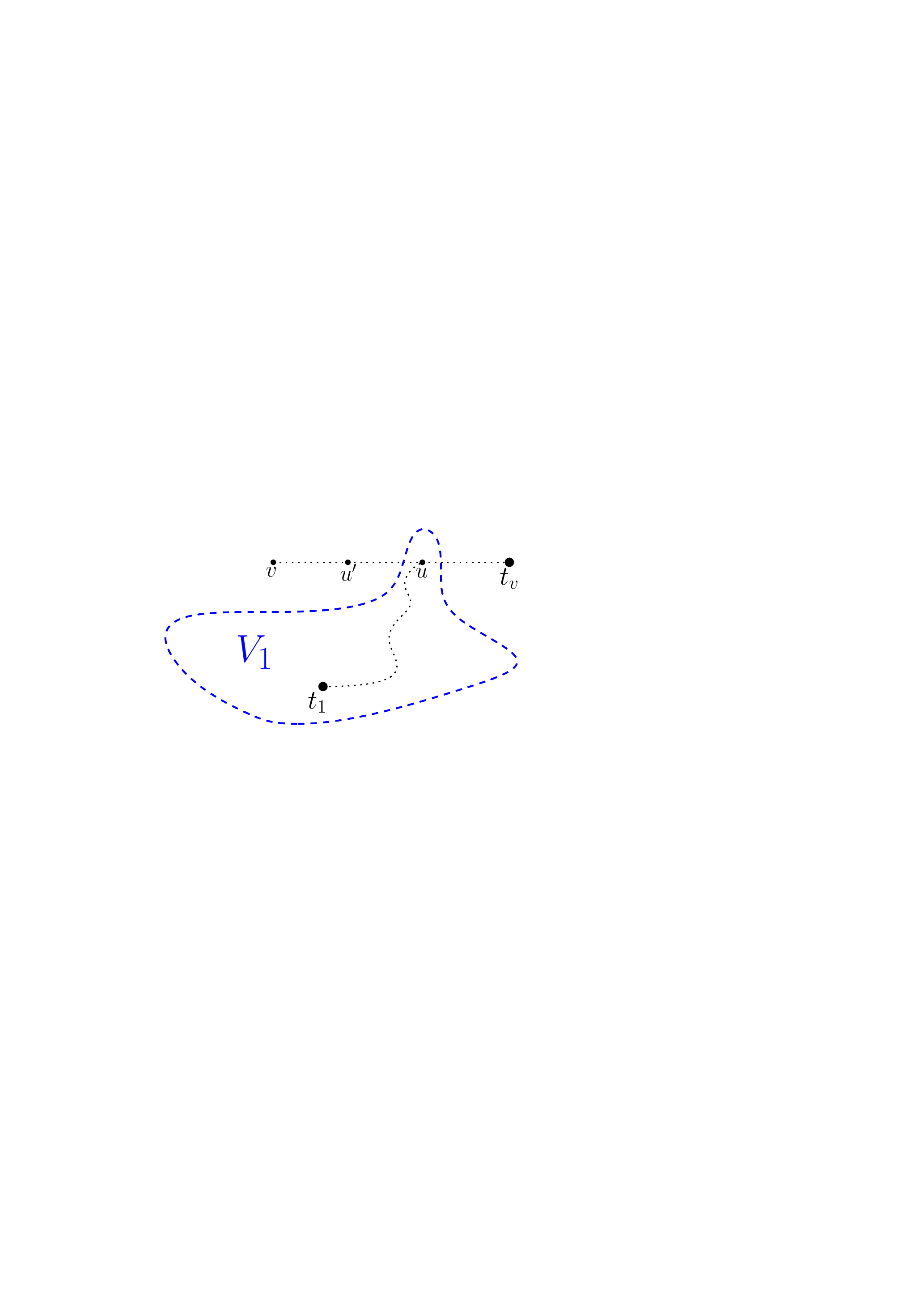}} 
  \caption{\label{fig:connnectedT_i} \small \it 
    Illustrating the argument that for every $v\in C_i$,
    also its closest terminal $t_v\in C_i$.
    Assuming some $u$ on the path between them joined $V_1$,
    we conclude the entire path from $u$ to $v$ joins $V_1$.
  }
\end{figure}

By construction,
there is only a single edge $e_i$ that connects $C_i$ and $V_1$,
and denote its two endpoints by $r_i\in C_i$ and $s_i\in V_1$. 
Let $\tilde{t}_i\in K$ be the closest terminal to $r_i$, thus $\tilde{t}_i\in C_i$. 
Observe that $r_i$ is the closest vertex to $r$ among all vertices in $C_i$,
and in particular every path from a terminal $t\in C_i$ to $r$ goes through $r_i$. Let $\pi_i$ be an ordering of the terminals in $K_i$ according to increasing distance from $r_i$. Note that $\pi_i$ is just the order $\pi$ restricted to $K_i$.
Since all the clusters created by the \texttt{Relaxed-Voronoi} algorithm are connected, no vertex in $C_i$ can join a cluster associated with a terminal outside $K_i$. 
In particular, for every $v\in C_i$ the distance $D(v)$ to the closest terminal in the restricted tree $G[C_i]$ remains the same (as $t_v\in C_i$).
Therefore, if we execute the \texttt{Relaxed-Voronoi} algorithm on $C_i$ with terminal set $K_i$ and order $\pi_i$, the partition of $C_i$ to clusters will be identical to the partition of $C_i$ induced by the original algorithm (on $T$ with the order $\pi$).
Accordingly, if we combine all the clusters created by such executions with $V_1$, we get the same terminal partition as produced by the \texttt{Relaxed-Voronoi} algorithm on the original graph.

Next, we argue by induction on the number of terminals that for every terminal $t$,
$d_{T_K}(t_{1},t)\le \frac{R+1}{R-1}\cdot d_{T}(t_{1},t)$. In a tree with a single or
two terminals this claim is trivial. We now prove the induction step. 
Let $t_i$ be some terminal which belongs to the connected component $C_i$ (in $T\setminus V_1$). By applying the induction hypothesis to the tree $C_i$ with order $\pi_i$, it holds that $d_{T_k}(t_i,\tilde{t}_i)\le \frac{R+1}{R-1}\cdot d_{T}(t_i,\tilde{t}_i)$ as $\tilde{t}_i$ is the first terminal in the order $\pi_i$.
Note that $r_i$ will necessarily join the cluster of $\tilde{t}_i$, therefore the edge ($e_i=\{s_i,r_i\}$) crosses the clusters of $t_1$ and  $\tilde{t}_i$, which implies that there is an edge between $t_1$ to $\tilde{t}_i$ in $T_K$. See \Cref{fig:distortionBound} for illustration.
\begin{figure}[ht]
	\centering{\includegraphics[scale=0.6]{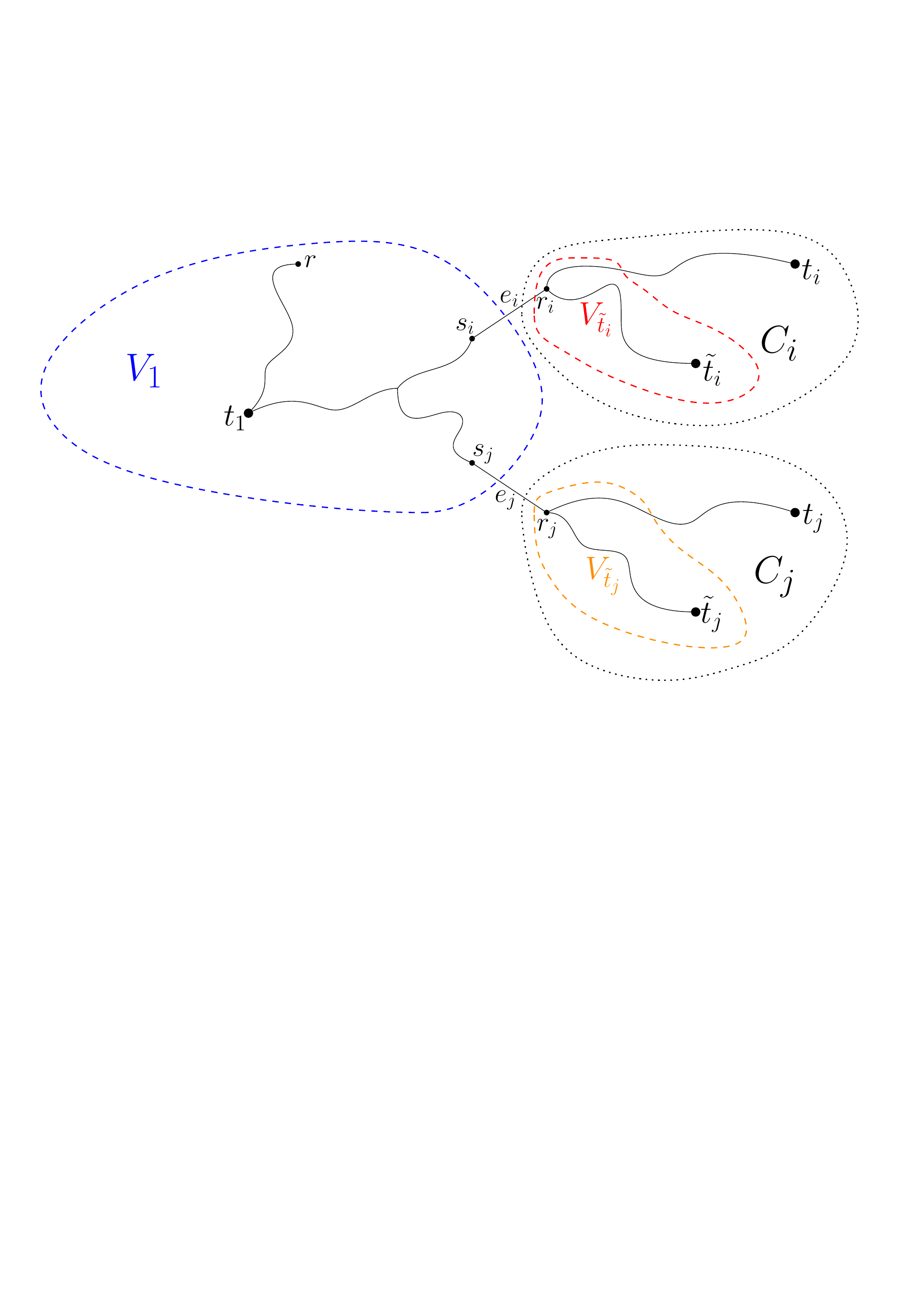}} 
	\caption{\label{fig:distortionBound}\small \it 
	Illustrating the bound on $d_{T_k}(t_i,t_j)$.
Initially $d_{T_k}(t_1,t_i)$ is bounded. Notice that $\tilde{t}_i$ is the closest terminal to $r_i$. Using the induction hypothesis we have that $d_{T_k}(\tilde{t}_i,t_i)\le \frac{R+1}{R-1}\cdot d_T(\tilde{t}_i,t_i)$. As $\{t_1,\tilde{t}_i\}$ is an edge in $T_k$, the bound follows.\\	
Next, the bound on $d_{T_k}(t_i,t_j)$. Notice that $d_T(t_i,t_j)\geq d_T(t_i,s_i)+d_T(t_j,s_j)$. $d_{T_k}(t_i,t_j)$ is upper bounded by going through $t_1$, using the assertion above.}
\end{figure}

As $r_i$ has a neighbor in $V_1$ but did not join $V_1$, necessarily $d_T(t_1,r_i)>R\cdot D(r_i)=R\cdot d_T(r_i,\tilde{t}_i)$. We conclude,
\begin{align*}
d_{T_{K}}\left(t_{1},t_{i}\right) & \le d_{T_{K}}\left(t_{1},\tilde{t}_{i}\right)+d_{T_{K}}\left(\tilde{t}_{i},t_{i}\right)\\
& \le d_{T}\left(t_{1},\tilde{t}_{i}\right)+\frac{R+1}{R-1}\cdot d_{T}\left(\tilde{t}_{i},t_{i}\right)\\
&\le d_{T}\left(t_{1},r_{i}\right)+D(r_i)+\frac{R+1}{R-1}\cdot \big(D(r_i)+d_T(r_i,t_i)\big)\\
&< d_{T}\left(t_{1},r_{i}\right)+\big(1+\frac{R+1}{R-1}\big)\cdot \frac{d_T(t_1,r_i)}{R}+\frac{R+1}{R-1}\cdot d_T(r_i,t_i)\\
&=\frac{R+1}{R-1}\cdot\big(d_T(t_1,r_i)+d_T(r_i,t_i)\big)=\frac{R+1}{R-1}\cdot d_T(t_1,t_i)~.
\end{align*}

Finally, we show by induction that for every pair of terminals $t_i,t_j\in K\setminus\{t_1\}$, $d_{T_k}(t_i,t_j)<\frac{\left(R+1\right)^{2}}{R-1}\cdot d_{T}(t_i,t_j)$. If $t_i,t_j$ belong to the same connected component of $T\setminus V_1$ then the argument follows by the induction hypothesis. Otherwise, $t_i\in C_i$ and $t_j\in C_j$ for $i\ne j$. 
Recall that there is a single edge $e_i=\{s_i,r_i\}$ from $C_i$ to $V_1$. Clearly, the unique path in $T$ from $t_i$ to $t_j$ goes through $V_1$ and in particular through $s_i$ and $s_j$ (note that it is possible that $s_i=s_j$). Therefore, $d_T(t_i,t_j)\ge d_T(t_i,s_i)+d_T(s_j,t_j)$. As $s_i\in V_1$, it holds that $d_{T}\left(t_{1},s_{i}\right)\le R\cdot D(s_{i})\le R\cdot d_{T}\left(t_{i},s_{i}\right)$. Therefore,
\begin{equation}
d_{T}\left(t_{1},t_{i}\right)\le d_{T}\left(t_{1},s_{i}\right)+d_{T}\left(s_{i},t_{i}\right)\le\left(R+1\right)\cdot d_{T}\left(s_{i},t_{i}\right).\label{eq:t2tibound}
\end{equation}
Similarly $d_{T}\left(t_{1},t_{j}\right)\le\left(R+1\right)\cdot d_{T}\left(s_{j},t_{j}\right)$.
Using our claim above about $t_1$, we conclude (see \Cref{fig:distortionBound} for illustration)
\begin{align*}
d_{T_{k}}\left(t_{i},t_{j}\right) & \le d_{T_{k}}\left(t_{i},t_{1}\right)+d_{T_{k}}\left(t_{1},t_{j}\right)\\
& \le\frac{R+1}{R-1}\cdot\left(d_{T}\left(t_{i},t_{1}\right)+d_{T}\left(t_{1},t_{j}\right)\right)\\
& \overset{(\ref{eq:t2tibound})}{\le}\frac{\left(R+1\right)^{2}}{R-1}\cdot\left(d_{T}\left(t_{i},s_{i}\right)+d_{T}\left(s_{j},t_{j}\right)\right)\\
& \le\frac{\left(R+1\right)^{2}}{R-1}\cdot d_{T}\left(t_{i},t_{j}\right).
\end{align*}
The expression $\frac{\left(R+1\right)^{2}}{R-1}$ is minimized by
choosing $R=3$, which proves the upper bound $8$.

\section{\MOE for Doubling Metrics}\label{sec:doubling}
In this section we analyze the \NV algorithm (Algorithm \ref{alg:metric}) for the \MOE problem, in the case where the metric spaces restricted on the terminals has doubling dimension $\ddim$.
Given a metric space $(X,d)$, Gonzalez's order \cite{G85} is defined as follows. $x_1$ is an arbitrary point, $x_2$ is the farthest point from $x_1$, and in general $x_i$ is the farthest point from $\{x_1,\dots,x_{i-1}\}$. In other words, $x_i$ is the point maximizing $d(x_i,\{x_1,\dots,x_{i-1}\})$.

\begin{theorem}\label{thm:mainDoubling}
	Let $(X,d)$ be a metric space with a set of terminals $K\subseteq X$ such that the metric space restricted to the terminals has doubling dimension $\ddim$. 
	Let $\pi$ be Gonzalez's order. Let $R_j=2\cdot e^{Z_j}$, where $Z_1,\dots,Z_k$ are \iid variables sampled according to the distribution ${\rm EXP}(c\cdot\ddim)$ for large enough constant $c$. Then the expected  distortion returned by the \NV algorithm for the \MOE problem is $O(\ddim)$.
\end{theorem}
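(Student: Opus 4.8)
The plan is to bound, for every pair of terminals $t_a, t_b$, the probability that they end up in different clusters (a ``separation event'') at distance scales comparable to $d(t_a,t_b)$, and to show that conditioned on separation at such a scale, the expected distance $d(f(x),f(y))$ that their ``representatives'' $x,y$ incur is $O(d(t_a,t_b))$ times the chance of separation; summing over scales then gives the $O(\ddim)$ bound. More concretely, fix $x,y \in X$ and let $\beta = d(x,y)$; the interesting case is $\beta$ small relative to $D(x),D(y)$, since otherwise $d(f(x),f(y)) = O(D(x)+D(y)) = O(\beta)$ trivially once one checks that $f$ moves a point by at most $R_j \cdot D(\cdot)$ and the $R_j$'s have bounded expectation. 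Thus assume $x,y$ are ``close'' and consider the first terminal $t_j$ in the Gonzalez order whose cluster captures at least one of $x,y$ but not both — call this the event $A_j$ that separates $x$ from $y$. On $A_j$, the point that joined $V_j$ (say $x$) satisfies $d(t_j,x) \le R_j D(x)$, while the other point $y$ did not, so $d(t_j,y) > R_j D(y)$; combined with $d(x,y)=\beta$ this pins down $R_j$ to lie in a window of multiplicative width $O(1 + \beta/D)$ around $d(t_j,x)/D(x)$, and the memorylessness of $Z_j$ makes $\Pr[R_j \in \text{window} \mid R_j \ge \text{threshold}]$ of order $\frac{1}{c\,\ddim}\log(1 + O(\beta/D)) = O\!\big(\tfrac{\beta}{c\,\ddim\, D}\big)$.

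The next step is to control \emph{how many} terminals $t_j$ can plausibly be the separating terminal, i.e. can be ``relevant'': these are terminals with $d(t_j, x) = O(D(x) \cdot e^{Z_j})$ and which come early enough in the Gonzalez order that $x$ is still unclustered. Here is where the Gonzalez ordering and the doubling dimension enter. The key structural fact about the Gonzalez order is that when $t_j$ is processed, every earlier terminal $t_1,\dots,t_{j-1}$ is at distance $\ge d(t_j, \{t_1,\dots,t_{j-1}\}) =: \gamma_j$ from each other, and $\gamma_j$ is non-increasing in $j$; moreover $D(x) \le \gamma_j$ roughly (else $x$ would already be near an earlier terminal, with the caveat that $x$ itself need not be a terminal, which is handled by passing to a nearest terminal of $x$). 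So the relevant terminals at a given scale form a $\gamma$-separated set inside a ball of radius $O(D(x) e^{Z})$ around $x$'s nearest terminal; by the packing property of doubling metrics this set has size $2^{O(\ddim \cdot (1 + Z))}$. Using the exponential tail of $Z$ with parameter $\Theta(\ddim)$, the expected number of relevant terminals, with their separation-probability weights, telescopes to $O(1)$ per scale after we absorb the $2^{O(\ddim \cdot Z)}$ growth against the $e^{-\Theta(\ddim)\cdot Z}$ decay — this is exactly why the exponential parameter must be $c\cdot\ddim$ for a large enough constant $c$.

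Putting it together: for the close case, $\E[d(f(x),f(y))] \le \sum_{t_j \text{ relevant}} \Pr[A_j] \cdot \E[\,d(f(x),f(y)) \mid A_j\,]$, where on $A_j$ one endpoint maps to $t_j$ and the other maps (eventually) to some terminal at distance $O(D \cdot e^{Z})$, so the conditional expected displacement is $O(D \cdot e^{Z_j})$; multiplying by $\Pr[A_j] = O(\beta / (c\,\ddim\, D))$, integrating over $Z_j$, and summing the geometric series over scales yields $O(\beta) = O(d(x,y))$ with the right constant being $O(\ddim)$ once the packing bound's $\ddim$ factor is accounted for. For the ``far'' case $\beta \gtrsim D(x)$, one argues directly that both $f(x)$ and $f(y)$ land within $O(R \cdot \max(D(x),D(y)))$ of $x,y$ respectively and that $\E[R_j] = 2\,\E[e^{Z_j}] = O(1)$ for the exponential parameter chosen, so $\E[d(f(x),f(y))] = O(\beta)$. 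The main obstacle, and the step deserving the most care, is the packing/counting argument: correctly defining ``relevant terminal'' so that the Gonzalez separation $\gamma_j$ can be compared to $D(x)$ when $x\notin K$, and balancing the doubling packing growth $2^{O(\ddim\cdot Z)}$ against the exponential decay so that the per-scale contribution is genuinely $O(\ddim)$ rather than blowing up — this is where the specific constant $c$ in $\Exp(c\cdot\ddim)$ and the factor $2$ in $R_j = 2 e^{Z_j}$ are forced.
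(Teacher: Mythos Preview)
Your overall strategy matches the paper's: split into far and close cases, for the close case identify the settling/cutting terminal, bound the cut probability via memorylessness of the exponential, bound the number of relevant terminals via the Gonzalez order plus doubling packing, and sum over distance scales.

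However, your accounting of where the $\ddim$ factor enters is wrong and internally inconsistent. You write the cut probability as $\tfrac{1}{c\,\ddim}\log(1+O(\beta/D))=O(\beta/(c\,\ddim\,D))$, which assumes $Z_j$ has \emph{mean} $c\,\ddim$; yet for the packing step you invoke tail decay $e^{-\Theta(\ddim)\cdot Z}$, which assumes \emph{rate} $\Theta(\ddim)$. Only the latter is viable: one needs $\Pr[Z_j\ge i\ln 2]\le e^{-\Omega(\ddim)\,i}$ to dominate the packing count $2^{O(\ddim\cdot i)}$, and with rate $\Theta(\ddim)$ the cut probability becomes $1-e^{-\Theta(\ddim)\cdot O(\epsilon)}=O(\epsilon\cdot\ddim)$, with $\ddim$ in the \emph{numerator}. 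In the paper's argument the final $O(\ddim)$ arises entirely from this cut probability; the sum over distance scales contributes only $O(1)$, not an extra $\ddim$ factor as your last paragraph suggests. (Admittedly, the paper's preliminaries parametrize $\Exp(\lambda)$ by its mean while the proof effectively treats $c\,\ddim$ as the rate, which may have misled you.)

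A smaller gap: the far case is not handled by ``$\E[R_j]=2\,\E[e^{Z_j}]=O(1)$''. The displacement $d(x,f(x))$ is bounded by $R_{j(x)}\cdot D(x)$ for the \emph{capturing} index $j(x)$, and $R_{j(x)}$ is biased toward large values, so bounding a generic $\E[R_j]$ is not enough. The paper proves $\E[d(x,f(x))]=O(D(x))$ as a separate lemma (\Cref{lem:ExpectedTarget}), using precisely the Gonzalez/packing argument you reserve for the close case; that same lemma is also what controls $\E[d(y,f(y))\mid \mathcal{C}_j]$ inside the close-case computation.
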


\begin{proof}
	Consider a point $x\in X$, and let $i_{x}$ be the minimal index such that
	$d(t_{x},t_{i_{x}})\le D(x)$. Set $K_{x}=\left\{ t_{1},\dots,t_{i_{x}}\right\} $.
	As $R_{i_{x}}\ge2$, if $x$ is unassigned until the $i_x$ round, then $f(x)=t_{i_x}$. Therefore, $f(x)\in K_{x}$. For every $t,t'\in K_x\setminus \{t_{x}\}$, $d(t,t')\ge D(x)$.
	Using the packing property, for $i\ge 1$, $\left|B\left(v,2^{i}\cdot D(v)\right)\cap K_{x}\right|\le\left|B\left(t_{x},\left(2^{i}+1\right)\cdot D(v)\right)\cap K_{x}\right|=2^{O\left(i\cdot\ddim\right)}$.
	\begin{lemma}
		\label{lem:ExpectedTarget}For every $x\in X$, $\mathbb{E}\left[d(x,f(x))\right]=O\left(1\right)\cdot D(x)$.
	\end{lemma}
	
	\begin{proof}
		For $i\ge 3$, let $K_i\subseteq K_x$ be the set of terminals at distance $[2^{i-1},2^i)\cdot D(v)$ from $x$. In order for the terminal $t_j\in K_i$ to cover $x$, it must be that $R_j\ge 2^{i-1}$, where a terminal $t$ covers a point $z$ if $f(z)=t$. This happens with probability at most  
		\[
		\Pr\left[R_{j}\ge2^{i-1}\right]=\Pr\left[Z_{j}\ge(i-2)\cdot\ln2\right]=e^{-c\cdot\ddim\cdot(i-2)\cdot\ln2}\le e^{-\frac{c}{5}\cdot\ddim\cdot i}~.
		\]
		By the union bound, the probability that some terminal from $K_i$ covers $x$ is bounded by $|K_i|\cdot e^{-\frac{c}{5}\cdot\ddim\cdot i}$. We conclude that for large enough constant $c$,
		\begin{align*}
		\mathbb{E}\left[d(x,f(x))\right] & \le2^{2}\cdot D(x)+\sum_{i=3}^{\infty}\Pr\left[f(x)\in K_{i}\right]\cdot2^{i}\cdot D(x)\\
		& =4\cdot D(x)+D(x)\cdot\sum_{i=3}^{\infty}2^{O(i\cdot\ddim)}\cdot e^{-\frac{c}{5}\cdot\ddim\cdot i}\cdot2^{i}=O(D(x))~.
		\end{align*}
	\end{proof}
	
	Consider a pair of points $x,y\in X$ such that $d(x,y)=\eps\cdot\min\{D(x),D(y)\}$. If $\eps=\Omega(1)$, assume w.l.o.g that $D(x)\le D(y)$, then $D(y)\le D(x)+d(x,y)=O(1)\cdot d(x,y)$. Using \Cref{lem:ExpectedTarget} we conclude
	\begin{align}
	\mathbb{E}\left[d(f(x),f(y))\right] & \le\mathbb{E}\left[d(f(x),x)\right]+d(x,y)+\mathbb{E}\left[d(y,f(y))\right]\nonumber\\
	& =O\left(D(x)+D(y)\right)+d(x,y)=O\left(1\right)\cdot d(x,y)~.\label{eq:LargeEps}
	\end{align}
	Thus from now on we can assume that $\eps$ is upper bounded by small enough constant,
	and we also drop the assumption that $D(x)\le D(y)$. 
	We say that a terminal $t_j$ \emph{settles} the pair $\{x,y\}$ if it is the first terminal to cover at least
	one point among $\{x,y\}$, and denote this event by $\mathcal{S}_{j}$. We say that $t_{j}$ \emph{cuts} $\{x,y\}$ if $t_{j}$
	settles $\{x,y\}$ but covers only one of $x,y$, and denote this event by $\mathcal{C}_{j}$. 
	Set $R_x=\frac{d(x,t_{j})}{D(x)}$, $R_y=\frac{d(y,t_{j})}{D(y)}$. Assuming w.l.o.g that $R_x\le R_y$, we get
	\begin{equation}
	R_{y}=\frac{d(t_{j},y)}{D(y)}\le\frac{d(t_{j},x)+d(v,u)}{D(x)-d(v,u)}\le\frac{R_{x}\cdot D(x)+\epsilon\cdot D(x)}{D(x)-\epsilon\cdot D(x)}\le\frac{1+\epsilon}{1-\epsilon}\cdot R_{x}<\left(1+3\epsilon\right)\cdot R_{x}~.\label{eq:RyBound}
	\end{equation}
	Assuming that $t_j$ settles $\{x,y\}$, using the memoryless property we can bound the probability that $t_j$ cuts $\{x,y\}$.
	\begin{align}
	\Pr\left[\mathcal{C}_{j}\mid\mathcal{S}_{j}\right] & =\Pr\left[R_{j}<R_{y}\mid R_{j}\ge R_{x}\right]\overset{(\ref{eq:RyBound})}{<}\Pr\left[2\cdot e^{Z_{j}}<R_{x}\cdot(1+3\epsilon)\mid2\cdot e^{Z_{j}}<R_{x}\right]\nonumber\\
	& =\Pr\left[Z_{j}<\ln(1+3\epsilon)\right]<\Pr\left[Z_{j}<3\epsilon)\right]=1-e^{-3\epsilon\cdot c\cdot\ddim}\le6\epsilon\cdot c\cdot\ddim~.\label{eq:cutP}
	\end{align}
	Suppose that $t_j$ indeed cuts $\{x,y\}$. Following the same arguments as \Cref{lem:ExpectedTarget}, the expected distance between $y$ to $f(y)$ still will be $O(D(y))=O(\frac1\eps)\cdot  d(x,y)$. Thus,  
	\begin{equation}
	\mathbb{E}\left[d(f(x),f(y))\mid\mathcal{C}_{j}\right]\le d(t_{j},x)+d(x,y)+\mathbb{E}\left[d(y,f(y))\mid\mathcal{C}_{j}\right]=d(t_{j},\{x,y\})+O\left(\frac{1}{\epsilon}\right)\cdot d(x,y)~.\label{eq:ExpCutDist}
	\end{equation}
	For $i\ge 1$, denote by $\tilde{K}_i\subseteq K_x\cup K_y$ the set of terminals at distance $[2^{i-1},2^i)\cdot \min\{D(x),D(y)\}$ from $\{x,y\}$. By packing arguments, $|\tilde{K}_i|=2^{O(i\cdot\ddim)}$. 
	By similar arguments to \Cref{lem:ExpectedTarget}, for $i\ge 3$, the probability that $\{x,y\}$ is settled by a terminal from $\tilde{K}_i$ is bounded by
	$2^{-\Omega(i\cdot\ddim)}$.
	We conclude,
	\begin{align*}
	\mathbb{E}\left[d(f(x),f(y))\right] 
	& =\sum_{j}\Pr\left[\mathcal{S}_{j}\right]\cdot\Pr\left[\mathcal{C}_{j}\mid\mathcal{S}_{j}\right]\cdot\mathbb{E}\left[d(f(x),f(y))\mid\mathcal{C}_{j}\right]\\
	& \overset{(\ref{eq:cutP},\ref{eq:ExpCutDist})}{\le}6\epsilon\cdot c\cdot\ddim\cdot\sum_{j}\Pr\left[\mathcal{S}_{j}\right]\cdot\left(d(t_{j},\{x,y\})+O\left(\frac{1}{\epsilon}\right)\cdot d(x,y)\right)\\
	& =O\left(\ddim\right)\cdot d(x,y)+O\left(\epsilon\cdot\ddim\right)\cdot\left(4+\sum_{i\ge3}2^{-\Omega(i\cdot \ddim)}\cdot2^{i}\right)\cdot\min\{D(x),D(y)\}\\
	& =O\left(\ddim\right)\cdot d(x,y).
	\end{align*}
	
\end{proof}

\section{Connected \MOE}
The focus of this section is applying the (Graphic-)\NV algorithm (Algorithm \ref{alg:mainSPR}) for the connected-\MOE problem.

\begin{theorem}\label{thm:connected}
	Let $G=(V,E,w)$ be a weighted graph and $K\subseteq X$ a set of terminals of size $k$.
	Let $\pi$ be arbitrary, and let $R_j=e^{Z_j}$, where $Z_1,\dots,Z_k$ are \iid variables sampled according to distribution ${\rm EXP}(c\cdot\ln k)$ for large enough constant $c$. Then the expected  distortion returned by the \NV algorithm for the connected \MOE problem is $O(\log k)$.
\end{theorem}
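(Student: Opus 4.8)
The plan is to mirror the analysis of Theorem~\ref{thm:mainDoubling}, but with the packing property of doubling metrics replaced by the trivial bound that there are only $k$ terminals altogether. First I would recall from Lemma~\ref{lem:AlgRetPar} that the graphic \NV algorithm always outputs a terminal partition, so the induced retraction $f:V\rightarrow K$ is well-defined and each cluster $f^{-1}(t_j)$ is connected; this immediately handles the structural (connectivity) requirement, and it remains only to bound $\EX[d_G(f(x),f(y))]$ for an arbitrary pair $x,y\in V$. As in the doubling case, for a vertex $x$ let $t_x$ be its closest terminal, let $i_x$ be the least index with $d_G(t_x,t_{i_x})\le D(x)$, and set $K_x=\{t_1,\dots,t_{i_x}\}$; since $R_j=e^{Z_j}\ge 1$ always, if $x$ survives to round $i_x$ it is captured then, so $f(x)\in K_x$ and in particular $d_G(x,f(x))\le d_G(x,t_{i_x})\le 2D(x)$ deterministically. (In fact one can use $K_x=K$ here, so this step is only needed to keep $d_G(x,f(x))$ bounded.)

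Next I would prove the analogue of Lemma~\ref{lem:ExpectedTarget}: for every $x$, $\EX[d_G(x,f(x))]=O(D(x))$. Here the deterministic bound $d_G(x,f(x))\le 2D(x)$ already suffices, so this step is essentially free and much simpler than in the doubling setting. Then, following the same bookkeeping as the proof of Theorem~\ref{thm:mainDoubling}, for a pair $\{x,y\}$ with $d_G(x,y)=\eps\cdot\min\{D(x),D(y)\}$ I split into two cases. If $\eps=\Omega(1)$ then, exactly as in~\eqref{eq:LargeEps}, the triangle inequality plus the deterministic bounds on $d_G(x,f(x))$ and $d_G(y,f(y))$ give $\EX[d_G(f(x),f(y))]=O(1)\cdot d_G(x,y)$. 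So assume $\eps$ is smaller than a suitable constant. Define, as before, the events $\mathcal{S}_j$ (terminal $t_j$ is the first to capture at least one of $x,y$) and $\mathcal{C}_j$ ($t_j$ settles $\{x,y\}$ but captures only one of them). Writing $R_x=d_G(x,t_j)/D(x)$ and $R_y=d_G(y,t_j)/D(y)$ and assuming w.l.o.g.\ $R_x\le R_y$, the same computation as~\eqref{eq:RyBound} gives $R_y<(1+3\eps)R_x$, and then the memoryless property of the exponential distribution (as in~\eqref{eq:cutP}) yields
\[
\Pr[\mathcal{C}_j\mid\mathcal{S}_j]=\Pr[Z_j<\ln(1+3\eps)]<1-e^{-3\eps\cdot c\ln k}\le 6\eps\cdot c\ln k .
\]
Conditioned on $\mathcal{C}_j$, the triangle inequality together with $\EX[d_G(y,f(y))\mid\mathcal{C}_j]=O(D(y))=O(1/\eps)\cdot d_G(x,y)$ (the analogue of~\eqref{eq:ExpCutDist}) gives $\EX[d_G(f(x),f(y))\mid\mathcal{C}_j]\le d_G(t_j,\{x,y\})+O(1/\eps)\cdot d_G(x,y)$.

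Finally I would assemble these pieces: since exactly one terminal settles $\{x,y\}$,
\[
\EX[d_G(f(x),f(y))]=\sum_j \Pr[\mathcal{S}_j]\cdot\Pr[\mathcal{C}_j\mid\mathcal{S}_j]\cdot\EX[d_G(f(x),f(y))\mid\mathcal{C}_j]\le 6\eps\cdot c\ln k\sum_j\Pr[\mathcal{S}_j]\cdot\big(d_G(t_j,\{x,y\})+O(\tfrac1\eps)d_G(x,y)\big).
\]
The term $\sum_j\Pr[\mathcal{S}_j]\cdot O(1/\eps)d_G(x,y)=O(1/\eps)d_G(x,y)$ contributes $O(\ln k)\cdot d_G(x,y)$. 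For the term $\sum_j\Pr[\mathcal{S}_j]\cdot d_G(t_j,\{x,y\})$ I would bound it by grouping the terminals into distance bands $[2^{i-1},2^i)\cdot\min\{D(x),D(y)\}$ from $\{x,y\}$: a band contains at most $k$ terminals (the trivial count, replacing $2^{O(i\cdot\ddim)}$ from the doubling proof), while for $i\ge 3$ each such $t_j$ can settle $\{x,y\}$ only if $R_j\ge 2^{i-1}$, which by the same calculation as in Lemma~\ref{lem:ExpectedTarget} happens with probability at most $e^{-\Omega(i\cdot c\ln k)}=k^{-\Omega(ci)}$. Hence the expected contribution of band $i$ is at most $k\cdot k^{-\Omega(ci)}\cdot 2^i\cdot\min\{D(x),D(y)\}$, which for $c$ a large enough constant is summable over $i\ge 3$ and bounded by $O(1)\cdot\min\{D(x),D(y)\}=O(1/\eps)\cdot d_G(x,y)$; the $O(1)$ bands with $i\le 2$ trivially contribute $O(1)\cdot\min\{D(x),D(y)\}$ as well. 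Multiplying by the $6\eps\cdot c\ln k$ prefactor gives $O(\ln k)\cdot d_G(x,y)$ for this term too, completing the bound $\EX[d_G(f(x),f(y))]=O(\log k)\cdot d_G(x,y)$. The main obstacle, as in the doubling proof, is the last step: ensuring that summing the "cutting cost" $d_G(t_j,\{x,y\})$ over all candidate settlers does not blow up — here the decay $k^{-\Omega(ci)}$ from picking $c$ large enough must beat both the crude band size $k$ and the geometric factor $2^i$, which is exactly where the choice $R_j=e^{Z_j}$ with $Z_j\sim\mathrm{EXP}(c\ln k)$ is used.
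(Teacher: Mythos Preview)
Your proposal follows the template of the doubling proof closely, and the overall architecture (settle/cut events, memorylessness giving an $O(\eps\log k)$ cutting probability, summing over distance bands with the trivial count $k$ per band) is sound. However, the opening step contains a genuine error. You assert that because $R_j=e^{Z_j}\ge 1$, if $x$ survives to round $i_x$ it is captured then and hence $d_G(x,f(x))\le d_G(x,t_{i_x})\le 2D(x)$ \emph{deterministically}. This is wrong on two counts. First, $R_{i_x}\ge 1$ only guarantees capture when $d_G(x,t_{i_x})\le D(x)$, but from $d_G(t_x,t_{i_x})\le D(x)$ you only get $d_G(x,t_{i_x})\le 2D(x)$; in \Cref{thm:mainDoubling} this worked precisely because there $R_j=2e^{Z_j}\ge 2$. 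Second, even granting $f(x)\in K_x$, nothing prevents an earlier terminal $t_\ell$ with $\ell<i_x$, $d_G(x,t_\ell)\gg D(x)$, and a large $R_\ell$ from capturing $x$ first, so the inequality $d_G(x,f(x))\le d_G(x,t_{i_x})$ simply does not follow from $f(x)\in K_x$. The only deterministic fact is $d_G(x,f(x))\le R_{f(x)}\cdot D(x)$ (this is exactly the test in line~\ref{line:desidion}); the paper then argues that with high probability $\max_j R_j\le 2$ (union bound: $\Pr[R_j>2]=\Pr[Z_j>\ln 2]=k^{-c\ln 2}$), so $d_G(v,f(v))\le 2D(v)$ holds w.h.p.\ rather than surely, and the low-probability complement contributes negligibly to the expectation.

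There is a second, more structural point where your proposal and the paper diverge. The paper first reduces, via the triangle inequality along a shortest path, to bounding $\EX[d_G(f(u),f(v))]$ only for \emph{edges} $\{u,v\}\in E$, and this is essential in the graphic algorithm: once $v$ joins $V_j$, its neighbor $u$ is automatically adjacent to $V_j$, so whether $u$ joins $V_j$ is governed purely by the distance test, and the memoryless calculation~\eqref{eq:cutP} applies verbatim. For an arbitrary pair $x,y$ (your setup), even if $R_j\ge R_y$, the cluster $V_j$ may fail to reach $y$ because the path has been severed by earlier clusters, so the identity $\Pr[\mathcal{C}_j\mid\mathcal{S}_j]=\Pr[R_x\le R_j<R_y\mid R_j\ge R_x]$ need not hold. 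Once you restrict to edges and replace the deterministic bound by the w.h.p.\ one, the remainder of your argument goes through and matches the paper's sketch.
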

By the triangle inequality, it is enough to prove that for every edge $\{u,v\}\in E$ (where $d_G(v,u)=w(v,u)$) it holds that $\mathbb{E}_{f\sim\mathcal{D}}\left[d(f(u),f(v)\right]\le\alpha\cdot d_G(v,u)$.
The proof itself follows almost the same lines as the proof of \Cref{thm:mainDoubling}. With high probability, $R_j\le 2$ for every terminal $t_j$. Therefore, for every vertex $v$, $d(v,f(v))\le 2\cdot D(v)$. Once a vertex $v$ joins the cluster $V_j$, the probability that its unclustered neighbor vertex $u$, at distance $\eps\cdot D(v)$, does not join $V_j$ is bounded by $O(\eps\cdot \log k)$ (similarly to \Cref{eq:cutP}). Using these two facts we can bound the expected distortion by $O(\log k)$. We skip the exact details.

\section{Linear-Time Implementation}\label{sec:time}
Our algorithm often uses $D(v)$. The next lemma state that this values can be computed efficiently.
\begin{lemma}\label{lem:SubsetTreeDijekstra}
	There is a linear-time algorithm, that given as an input a  weighted graph $G=\left(V,E,w\right)$ and $K\subseteq V$ a set of terminals, outputs for every vertex $v\in V$ its distance from $K$.
\end{lemma}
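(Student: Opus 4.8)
The plan is to compute the distances $D(v) = d_G(v, K)$ for all $v \in V$ by a single-source shortest-path computation in an auxiliary graph. First I would augment $G$ with a new artificial vertex $s$ connected to every terminal $t \in K$ by an edge of weight $0$; call the resulting graph $G^+ = (V \cup \{s\}, E \cup \{\{s,t\} : t \in K\}, w^+)$. It is immediate from the definitions that for every $v \in V$ we have $d_{G^+}(s, v) = \min_{t \in K} d_G(v, t) = D(v)$: any $s$--$v$ path in $G^+$ must leave $s$ through some terminal $t$ at no cost and then travel inside $G$, and conversely every $t$--$v$ path in $G$ extends to an $s$--$v$ path of the same length. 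So it suffices to run a single-source shortest-path algorithm from $s$ in $G^+$.

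The remaining issue is that Dijkstra's algorithm with a binary heap runs in $O(|E| \log |V|)$ time, not linear. To obtain a genuinely linear bound I would instead observe that the classical SSSP algorithms that achieve $O(|E| + |V|)$ time apply here: for instance, Thorup's linear-time SSSP algorithm for undirected graphs with integer (or, after scaling, appropriately represented) weights, or — if one is willing to restrict to trees, which is the only case where the linear-time claim is actually invoked in \Cref{thm:mainSPR} — simply a BFS/DFS-style traversal. In the tree case the argument is especially clean: root $G^+$ at $s$, and note that $D(v)$ for a vertex $v$ equals the distance along the tree from $v$ up to the nearest ancestor that is a terminal, *or* down into $v$'s subtree to the nearest terminal there; a single post-order pass computing, for each vertex, the distance to the closest terminal in its subtree, followed by a single pre-order pass propagating the best "coming from above" value, computes all $D(v)$ in $O(|V|)$ time.

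The key steps, in order, are: (1) build $G^+$ by adding the zero-weight super-source $s$; (2) argue the identity $d_{G^+}(s,v) = D(v)$ via the path-decomposition observation above; (3) invoke a linear-time SSSP routine on $G^+$ from $s$ — citing Thorup in the general weighted-graph case, or giving the explicit two-pass tree traversal when $G$ is a tree — and (4) read off $D(v) = d_{G^+}(s,v)$ for each $v$, noting that adding one vertex and $|K| \le |V|$ edges keeps the size linear so the overall running time is $O(|E| + |V|)$.

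The main obstacle is step (3): in full generality, linear-time single-source shortest paths is only known for undirected graphs and relies on nontrivial machinery (Thorup's algorithm, or a comparison-addition model argument), so if the paper wants a fully self-contained linear-time claim it should either restrict to the tree case (which is all that \Cref{thm:mainSPR} needs) and give the elementary two-pass argument, or explicitly cite the external linear-time SSSP result. Everything else — the super-source reduction and the correctness identity — is routine.
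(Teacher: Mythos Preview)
Your proposal is correct, and your two-pass tree traversal (post-order to get the closest terminal in the subtree, then pre-order to propagate the best value coming from above) is exactly the argument the paper gives. The paper's proof tacitly assumes $G$ is a tree despite the lemma's ``weighted graph'' wording, so your observation that the general-graph case would require an external linear-time SSSP result such as Thorup's is a fair critique; the paper simply does not address that case. The super-source reduction you add is a clean way to phrase the problem but is not used in the paper, which works directly with the rooted tree.
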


\begin{proof}
	We describe the algorithm. We root the tree in some arbitrary vertex $r\in V$. Thus each vertex (other then $r$) has a parent vertex. Our algorithm has two phases. In the first phase we sweep the tree upwards from the leafs to the root. For a vertex $v$, denote by $d(v)$ the distance from $v$ to it's closest terminal among its descendants ($\infty$ if it has no descendant terminal). The goal of the first phase is for each vertex to learn $d(v)$, and this is done in a dynamic programing fashion according to the order induced by the tree. At the beginning each leaf $v$ know $d(v)$ ($0$ if terminal and $\infty$ otherwise). Then, iteratively each internal vertex $v$ with children $\left\{ v_{1},\dots,v_{s}\right\}$ computes $d(v)=\min_{i}\left\{ d(v_{i})+d(v_{i},v)\right\}$ or  $d(v)=0$ if $v$ itself is a terminal. It is straightforward by induction that by the end of the first phase each vertex has the right value of $d(v)$. Moreover, for the root vertex $r$, $D(r)=d(r)$ (as all the terminals are the descendants of $r$). 
	
	In the second phase we sweep the tree downwards from the root to the leaves. In the first step, $r$ informs all its children the value $D(r)$. Then, iteratively, each vertex $v$ with parent $v'$ computes $D(v)=\min\left\{ d(v),D(v')+d(v',v)\right\} $. Again, by induction this is indeed the right value (as every path ending in $v$ which starts at a non-descendant of $v$ must go through $v'$).
	By the end of the second phase each vertex knows the correct value of $D(v)$. The linear time implementation follows as we traversed each edge exactly twice.
\end{proof}

The execution of the  \texttt{Relaxed-Voronoi} algorithm starts by computing the $D(v)$ values in linear time according to \Cref{lem:SubsetTreeDijekstra}. Next, in order to determine the permutation $\pi$, we choose an arbitrary vertex $r$ and run Dijkstra from it. In a tree, one can run the classic Dijkstra algorithm (as in \cite{FT87}) using a queue instead of a heap. As there is a unique path from $r$ to any other vertex, the algorithm still works properly.
Next, we cluster the vertices according to the permutation $\pi$. The set $N$ from the  \texttt{Create-Cluster} procedure can be implemented as a simple queue. As there is a unique path between every pair of vertices, once a vertex $v$ joins $N$, we can update $d(v,t_j)$ to its correct value. Moreover, there is no reason to maintain $U$.
As in all the executions of the \texttt{Create-Cluster} procedure for all terminals, each edge is traversed exactly once, the total linear time follows.

\bibliographystyle{alphaurlinit}

\bibliography{SteinerBib}

\end{document}